\documentclass[12pt,leqno]{article}

\usepackage{amsfonts,amsthm,amsmath,mathtools,amssymb,comment,enumitem}
\usepackage{aliascnt}
\usepackage{ascmac}

\usepackage{braket}
\usepackage{cleveref}

\theoremstyle{plain}
\newtheorem{thm}{Theorem}[section]
\newaliascnt{prop}{thm}   
\newtheorem{prop}[prop]{Proposition}
\aliascntresetthe{prop}
\newaliascnt{lem}{thm}   

\aliascntresetthe{lem}
\newaliascnt{cor}{thm}   

\aliascntresetthe{cor}
\theoremstyle{definition}
\newaliascnt{df}{thm}   
\newtheorem{df}[df]{Definition}
\aliascntresetthe{df}
\newaliascnt{rem}{thm}   
\newtheorem{rem}[rem]{Remark}
\aliascntresetthe{rem}
\newaliascnt{ex}{thm}   
\newtheorem{ex}[ex]{Example}
\aliascntresetthe{ex}
\newaliascnt{conj}{thm}   

\aliascntresetthe{conj}

\newcommand{\allone}{\mathbf{1}}

\newcommand{\Spec}{{\rm Spec}}

\newcommand{\RR}{\mathbb{R}}
\newcommand{\CC}{\mathbb{C}}
\newcommand{\FF}{\mathbb{F}}

\title{A New Approach to Code Smoothing Bounds}
\author{Tsuyoshi Miezaki\thanks{Faculty of Science and Engineering, Waseda University}
  \and
  Yusaku Nishimura\thanks{Graduate School of Fundamental Science and Engineering, Waseda University}
  \and
  Katsuyuki Takashima\thanks{Faculty of Education and Integrated Arts and Sciences, Waseda University} \thanks{Waseda Research Institute for Science and Engineering, Waseda University}}
\date{}

\begin{document}

\maketitle

\begin{abstract}
  Code smoothing is a phenomenon in which an error distribution makes a code statistically close to the uniform distribution over the ambient space. 
  This closeness is measured by the total variation distance. 
  Recently, Debris-Alazard et al.\ introduced a smoothing bound, which is an upper bound on this total variation distance. 
  Although the smoothing bound evaluates how the error distribution smooths a code, this bound applies only to linear codes. 
  In this paper, we generalize this bound to not only linear codes but also specific non-linear codes. 
  While the smoothing bound in previous work was obtained by Fourier analysis over finite abelian groups, we derive this bound using a graph-theoretic approach. 
  To derive the smoothing bound, we consider code smoothing as the mixing of random walks on a specific graph, and use the concept of equitable partitions, which is well-studied in graph theory.
  \end{abstract}

  \section{Introduction}
  
  \subsection{Background}
  The concept of smoothing was originally introduced to analyze the security of lattice-based cryptosystems, and recently, a similar approach has been proposed for codes. 
  An error distribution $f$ over $\mathbb{F}_2^n$ is said to smooth a code if the distribution obtained by adding error to the code is statistically close to the uniform distribution. 
  This closeness is evaluated by a sufficiently small total variation distance between the two distributions.
  Several researchers have attempted to use the code smoothing to establish a reduction from the decoding problem to the learning parity with noise (LPN) problem, that is, worst-case to average-case reductions \cite{BLV2019, DR2025, PB2025, YJ2021}.
  In this context, Debris-Alazard et al.\ \cite{DDRT2023} presented an upper bound on the total variation distance based on the Fourier analysis over locally compact abelian groups.
  This bound is referred to as a "smoothing bound," and it serves as one of the criteria for selecting the error distribution for a given code.
  For example, let $C$ be any code, and suppose that the error distribution $f_\omega$ is uniform over a Hamming sphere of radius $\omega$.
  Then, the smoothing bound establishes a condition on $\omega$ for $f_\omega$ to smooth $C$.
  

  Furthermore, it was pointed out in \cite{DDRT2023} that this total variation distance can also be viewed in the context of mixing in specific random walks. 
  It is well known that this distance is bounded by the eigenvalues of the corresponding transition matrices~\cite{LP2017, SC2004, W1997}. 
  Recently, it has been shown that by using a special subset of the state space, called a graphical design, as the initial state, a bound using even smaller eigenvalues can be obtained \cite{ST2025}.

  \subsection{Our Results}
  
  In this paper, we analyze the smoothing bound obtained in \cite{DDRT2023} from the perspective of random walks, and present Theorem~\ref{thm:main}, which generalizes this inequality for codes that are not necessarily linear.
  This method relies on equitable partitions, a concept closely related to the recently introduced graphical designs~\cite{S2020}.
  Although we aimed to improve the upper bound of the inequality by Debris-Alazard et al.\ using results from graphical designs, our approach did not yield an improvement over the existing bound.
  However, while the previous results in \cite{DDRT2023} rely on assumptions related to group structures, our bound (Theorem~\ref{thm:main}) only requires the existence of an equitable partition.
  Since equitable partitions can be obtained without relying on a group structure, Theorem~\ref{thm:main} can be applied to non-linear codes.
  Therefore, our work suggests that the concept of code smoothing can be extended to non-linear codes satisfying specific combinatorial properties.
  Furthermore, when the error distribution is either a Bernoulli or uniform over a Hamming sphere, codes satisfying the combinatorial conditions of Theorem~\ref{thm:main} can be interpreted in the context of graph theory.
  
  
  \subsection{Outline}
  This paper is organized as follows.
  In Section 2, we provide some terms related to codes and random walks, as well as the total variation distance.
  In Section 3, we consider the bound of the total variation distance using equitable partitions, and show that a result of Debris-Alazard et al.\ can be obtained in the case of codes.
  
  \section{Preliminaries}
  
  For any finite set $V$, we denote $\allone_V\in\mathbb{C}^{|V|}$ as the all-ones vector and $\allone_{X}\in\mathbb{C}^{|V|}$ as the characteristic vector of $X\subset V$.
  We also denote $u_V=\frac{\allone_V}{|V|}$ and $u_X=\frac{\allone_X}{|X|}$ as the uniform distributions over $V$ and $X$, respectively.
  When $V$ is clear, we omit $V$ and simply denote the all-ones vector and uniform distribution over $V$ as $\allone$ and $u$, respectively.
  We denote by $\langle \mathbf{x},\mathbf{y}\rangle$ the inner product of two complex vectors $\mathbf{x}=(x_1,\ldots,x_n)^{\top}$ and $\mathbf{y}=(y_1,\ldots,y_n)^{\top}$, that is, 
  \[
   \langle \mathbf{x},\mathbf{y}\rangle=\sum_{j=1}^n x_j\overline{y_j}.
  \]
  For any $\mathbf{x}=(x_1,\ldots,x_n)^\top\in \CC^n$, we denote by $\|\mathbf{x}\|_k$ the $L_k$-norm defined as follows:
  \[
   \|\mathbf{x}\|_k\coloneq\left(\sum_{i=1}^n |x_i|^k\right)^{\frac{1}{k}}.
  \]
  We also define $L_1(V)$ as the set of functions over $V$ whose $L_1$-norm is finite.
  For a square matrix $A$, we denote the multiset of its eigenvalues by $\Spec(A)$.

  \subsection{Total variation distance of a code}
  
  \begin{df}[Code, dual code]
   A subspace of $\mathbb{F}_q^n$ is called an $[n,k]_q$-linear code.
   The dual of an $[n,k]_q$-linear code $C$, denoted by $C^{\perp}$, is defined as follows:
   \[
   C^{\perp}\coloneq\{\mathbf{v}\in\mathbb{F}_q^n\mid \text{for all $\mathbf{c}\in C$, $\mathbf{v}\cdot\mathbf{c}=0$.}\}.
   \]
  \end{df}
  
  
  \begin{df}[Convolution]
   Let $G$ be a finite abelian group. 
   For any $f,h\in L_1(G)$, the convolution product is defined as follows:
   \[
   (f\ast h)(g)=\sum_{x\in G}f(g-x)h(x).
   \]
   We also define $f\ast^\ell h\coloneq f\ast h\ast\cdots\ast h$ as the $\ell$-fold convolution of $h$ with $f$, where $h$ appears $\ell$ times.
  \end{df}
  
  \begin{df}[Representation matrix]
   Let $G$ be a finite abelian group. 
   For any $f\in L_1(G)$, the representation matrix of $f$, denoted by $T_f$, is the square matrix indexed by $G$ whose $(c_1,c_2)$-entry is $f(c_1-c_2)$.
  \end{df}
  
  \begin{rem}
   For all $f,h\in L_1(G)$, we have $h\ast f=T_f h$, where functions on $G$ are identified with column vectors indexed by $G$.
  \end{rem}
  
  
  
  \subsection{Random walk}
  
  \begin{df}[Stochastic matrix]
   A matrix $A$ is called left stochastic if $A\allone=\allone$, and right stochastic if $\allone A=\allone$.
   In particular, when $A$ is both a left and right stochastic matrix, $A$ is called a doubly stochastic matrix.
  \end{df}
  
  \begin{df}[Random walk]
   Let $A$ be a left stochastic matrix indexed by a finite set $V$.
   A random walk over a finite set $V$ by a matrix $A$ is a Markov chain over $V$ with left transition matrix $A$.
  \end{df}
  
  We also define the total variation distance for a random walk.
  
  \begin{df}[Total variation distance of a random walk]
   Let $A$ be a left stochastic matrix indexed by a finite set $V$, and let $\mu_0$ be a probability distribution over $V$.
   For each nonnegative integer $\ell$, we write $\mu_\ell\coloneq A^\ell\mu_0$ for the probability distribution after $\ell$ steps of the random walk on $V$ with transition matrix $A$.
   The total variation distance of $\mu_0$ by a random walk with transition matrix $A$ after $\ell$ steps is defined as follows:
   \[
   \Delta_{\mbox{\small RW}}(u_V, \mu_\ell)\coloneq\frac{1}{2}\|u_V-A^\ell\mu_0\|_1.
   \]
   In particular, let $G$ be a finite abelian group, let $H$ be a subgroup of $G$, and let $f$ be a probability distribution over $G$.
   We also define the total variation distance of $H$ by $f$ after $\ell$ steps as follows:
   \[
      \Delta_{\mbox{\small Group}}(u_G, u_H\ast^\ell f)\coloneq\Delta_{\mbox{\small RW}}(u_G,T_f^\ell u_H).
   \]
  \end{df}
  
  
  \begin{rem}
      Since $u_H\ast^\ell f=T_f^\ell u_H$, we have
      \[\Delta_{\mbox{\small Group}}(u_G, u_H\ast^\ell f)=\frac{1}{2}\|u_G-u_H\ast^\ell f\|_1.\]
  \end{rem}

  \subsection{Previous work}
  
  In this subsection, we recall some fundamental facts from Fourier analysis and review a result of Debris-Alazard et al.~\cite{DDRT2023}.
  Hereafter, $G$ denotes a finite abelian group, and $\hat{G}$ denotes its character group.
  
  \begin{df}[Fourier transform]
   The Fourier transform of $f\in L_1(G)$, denoted by $\hat{f}$, is defined as follows:
   \[
   \hat{f}:\chi\in\hat{G}\mapsto\frac{1}{|G|}\sum_{g\in G}f(g)\overline{\chi(g)}.
   \]
  \end{df}
  
  The eigenvectors and eigenvalues of $T_f$ are determined by the Fourier transform of $f$.
  
  \begin{prop}\label{prop:spec}
   Let $f\in L_1(G)$.
   All characters $\chi\in \hat{G}$ are eigenvectors of $T_f$.
   Consequently, $T_f$ is normal and
  \[
   \Spec(T_f)=\{|G|\hat{f}(\chi) : \chi\in\hat{G}\}.
   \]
  \end{prop}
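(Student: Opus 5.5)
We verify directly that each character $\chi\in\hat G$, regarded as the column vector $(\chi(g))_{g\in G}$, satisfies $T_f\chi=|G|\hat f(\chi)\,\chi$. Then we use the orthogonality of characters to conclude that $T_f$ is normal and to read off its spectrum.

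The eigenvector computation is a change of variables in the convolution sum. For $c\in G$,
\[
(T_f\chi)(c)=\sum_{x\in G}f(c-x)\chi(x).
\]
Substitute $g=c-x$, so that $x=c-g$. Since $\chi$ is a homomorphism into the unit circle, $\chi(c-g)=\chi(c)\overline{\chi(g)}$, and therefore
\[
(T_f\chi)(c)=\chi(c)\sum_{g\in G}f(g)\overline{\chi(g)}=|G|\hat f(\chi)\,\chi(c).
\]
Since $\chi(0)=1$, the vector $\chi$ is nonzero, so it is an eigenvector with eigenvalue $|G|\hat f(\chi)$.

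For the remaining claims we need the standard facts about $\hat G$ for a finite abelian group: $|\hat G|=|G|$, and distinct characters are orthogonal, with $\langle\chi,\psi\rangle=|G|\delta_{\chi\psi}$. Consequently the normalized characters $\chi/\sqrt{|G|}$ form an orthonormal basis of $\CC^{|G|}$. Let $U$ be the unitary matrix with these columns. The previous step gives $U^*T_fU=D$, where $D$ is diagonal with entries $|G|\hat f(\chi)$ for $\chi\in\hat G$. A matrix that is unitarily diagonalizable is normal, because $T_fT_f^*=UDD^*U^*=UD^*DU^*=T_f^*T_f$. Its spectrum, counted with multiplicity, is the multiset of diagonal entries of $D$. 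This is exactly $\Spec(T_f)=\{|G|\hat f(\chi):\chi\in\hat G\}$ as a multiset, with multiplicities correct because there are $|G|$ linearly independent eigenvectors.

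The only step that goes beyond routine computation is the orthogonality relation together with the count $|\hat G|=|G|$. If these are not taken as known, I would prove orthogonality as follows. For $\chi\neq\psi$, the product $\chi\overline{\psi}$ is a nontrivial character, so there is some $h$ with $\chi\overline{\psi}(h)\neq 1$. Translating the sum by $h$ shows that $\sum_g\chi\overline{\psi}(g)$ is unchanged when multiplied by $\chi\overline{\psi}(h)$, so the sum must be zero. For the count, I would use the structure theorem to write $G$ as a product of cyclic groups, for which the characters are explicit. Everything else is the direct calculation above.
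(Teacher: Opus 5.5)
Your proof is correct and follows essentially the same route as the paper. Both use the same change-of-variables computation to get $T_f\chi=|G|\hat{f}(\chi)\chi$, then invoke orthogonality of characters and $|\hat{G}|=|G|$ to obtain an orthonormal eigenbasis, which yields normality and the spectrum; you additionally spell out the unitary diagonalization and the orthogonality argument that the paper leaves implicit.
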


  \begin{proof}
    Let $\chi \in \hat{G}$ be a character of $G$, which we can view as a column vector indexed by the elements of $G$. 
    Then, for any $g_1 \in G$, the $g_1$-th entry of the vector $T_f \chi$ is given by:
    \begin{align*}
    (T_f \chi)_{g_1} &= \sum_{g_2 \in G} (T_f)_{g_1, g_2} \chi(g_2) \\
    &= \sum_{g_2 \in G} f(g_1 - g_2) \chi(g_2)\\
    &= \sum_{h \in G} f(h) \chi(g_1 - h)\\
    &= \chi(g_1)\sum_{h \in G} f(h) \overline{\chi(h)}\\
    &= |G|\hat{f}(\chi)\chi(g_1).
    \end{align*}
    Since this equality holds for all $g_1 \in G$, it follows that $T_f \chi = |G| \hat{f}(\chi) \chi$. 
    This shows that $\chi$ is an eigenvector of $T_f$ with corresponding eigenvalue $|G| \hat{f}(\chi)$.
    Since $|\hat{G}| = |G|$ and the characters are orthogonal, $T_f$ is normal and its spectrum is $\mathrm{Spec}(T_f) = \{|G| \hat{f}(\chi) \mid \chi \in \hat{G}\}$.
  \end{proof}

  For a more detailed discussion, see \cite[Section~8.2]{SC2004}.
  
  Next, we define the periodization of a function on a finite abelian group.
  
  \begin{df}[Periodization]
   Let $H$ be a subgroup of $G$.
   For all $f\in L_1(G)$, we define the periodization of $f$, denoted by $f^{\mid H}\in L_1(G/H)$, as follows:
   \[
   f^{\mid H}:(g+H)\in G/H\mapsto \frac{1}{|H|}\sum_{h\in H}f(g+h).
   \]
  \end{df}
  
  The Fourier transform of the periodization is given by the Poisson summation formula.
  
  \begin{thm}[The Poisson summation formula]\label{thm:poisson}
      Let $H$ be a subgroup of $G$.
      Let $(\hat{f})_{\mid \widehat{G\slash H}}$ be the restriction of $\hat{f}$ to $\widehat{G\slash H}$.
      Then,
      \[
      \widehat{f^{\mid H}}=(\hat{f})_{\mid \widehat{G\slash H}}.
      \]
  \end{thm}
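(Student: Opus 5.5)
The plan is to compute both sides directly from the definitions of the Fourier transform and the periodization. The first step is to identify $\widehat{G/H}$ with a subgroup of $\hat{G}$. Every character $\psi$ of $G/H$ pulls back along the quotient map $g\mapsto g+H$ to a character $\chi=\psi\circ\pi$ of $G$ that is trivial on $H$. Conversely, any $\chi\in\hat{G}$ with $\chi|_H\equiv 1$ is constant on cosets, so it descends to $G/H$. Hence $\widehat{G/H}\cong H^\perp=\{\chi\in\hat G:\chi(h)=1\ \forall h\in H\}$, and this is the sense in which the restriction $(\hat f)_{\mid\widehat{G/H}}$ is to be read.

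The second step is the computation itself. Fix $\chi\in H^\perp$ and write $\bar\chi$ for the induced character of $G/H$. By the definition of the Fourier transform on $G/H$,
\[
\widehat{f^{\mid H}}(\bar\chi)=\frac{1}{|G/H|}\sum_{g+H\in G/H}f^{\mid H}(g+H)\overline{\bar\chi(g+H)}
=\frac{|H|}{|G|}\sum_{g+H}\frac{1}{|H|}\sum_{h\in H}f(g+h)\overline{\chi(g)}.
\]
Since $\chi(h)=1$ for all $h\in H$, we have $\overline{\chi(g)}=\overline{\chi(g+h)}$. The double sum over a coset representative $g$ and $h\in H$ then runs exactly once over all elements of $G$. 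The factors $|H|$ cancel, and we obtain
\[
\frac{1}{|G|}\sum_{x\in G}f(x)\overline{\chi(x)}=\hat f(\chi),
\]
which is the claimed identity.

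The only real care needed is in the first step: checking that the identification of $\widehat{G/H}$ with $H^\perp$ is well defined and compatible with the normalizations. The transform on $G/H$ uses the factor $1/|G/H|=|H|/|G|$, and the periodization carries a factor $1/|H|$. Once these two factors are seen to cancel, the rest is a reindexing of a finite sum, so I expect no analytic obstacle, since all groups are finite.
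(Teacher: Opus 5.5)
Your proof is correct. Reading $\widehat{G/H}$ as $H^\perp$ via pullback along $\pi$ is exactly how the restriction in the statement has to be interpreted. The computation also goes through under the paper's normalizations, for two reasons: $\frac{1}{|G/H|}\cdot\frac{1}{|H|}=\frac{1}{|G|}$, and $(g,h)\mapsto g+h$ is a bijection from a set of coset representatives times $H$ onto $G$.

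There is no proof in the paper to compare against. The paper quotes this statement as the standard Poisson summation formula from the Fourier-analytic setting of the earlier smoothing work, without proof.

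Your direct finite computation is the standard argument. It also confirms that no extra scalar factor appears with the paper's particular conventions: $\hat f$ is normalized by $1/|G|$, the transform on $G/H$ by $1/|G/H|$, and the periodization by $1/|H|$. The paper relies on exactly this when, in the proof of \Cref{thm:group}, it passes from $\Spec(T_{f^{\mid H}})=\{|G/H|\widehat{f^{\mid H}}(\chi)\}$ to $\Spec(T_f^{\mid H})=\{|G|\hat f(\chi)\mid \chi\in\widehat{G/H}\}$.
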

     
  Using the Poisson summation formula, Debris-Alazard et al.\ obtained an upper bound on the total variation distance of a code.
  
  \begin{thm}[{\cite{DDRT2023}}]\label{thm:scupper}
      Let $C$ be an $[n,k]_2$-linear code, and let $f$ be a probability distribution over $\mathbb{F}_2^n$.
      Then
  \[
  \Delta_{\mbox{\small Group}}(u,u_C\ast^\ell f)\leq 
  2^{n-1}\sqrt{\sum_{x\in C^\perp\setminus\{0\}}|\hat{f}(x)|^{2\ell}}. 
  \]
  \end{thm}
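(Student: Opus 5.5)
Write $G=\mathbb{F}_2^n$ and $g=u_C\ast^\ell f$. The plan is to pass from the $L_1$-norm to the $L_2$-norm by Cauchy--Schwarz, then apply Parseval, and finally identify the Fourier coefficients of $g$ using the convolution theorem. The first step is
\[
\Delta_{\mbox{\small Group}}(u,g)=\tfrac12\|u-g\|_1\le \tfrac12\sqrt{|G|}\,\|u-g\|_2 .
\]
With the normalization $\hat f(\chi)=\frac1{|G|}\sum f(x)\overline{\chi(x)}$, the characters $\chi_y(x)=(-1)^{x\cdot y}$ satisfy $\langle\chi_y,\chi_{y'}\rangle=|G|\delta_{y,y'}$. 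Hence Parseval reads
\[
\|h\|_2^2=|G|\sum_{y\in G}|\hat h(y)|^2 .
\]

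Next we compute $\hat g$. Since $T_f$ is diagonalized by the characters with eigenvalue $|G|\hat f(\chi)$ (Proposition~\ref{prop:spec}), and $g=T_f^\ell u_C$, expanding $u_C$ in the character basis gives
\[
\hat g(y)=(|G|\hat f(y))^\ell\,\hat u_C(y).
\]
This is just the convolution theorem, $\widehat{a\ast b}=|G|\hat a\hat b$. Directly, $\hat u_C(y)=\frac{1}{|G||C|}\sum_{c\in C}(-1)^{c\cdot y}$, which equals $\frac1{|G|}$ if $y\in C^\perp$ and $0$ otherwise; this is essentially the Poisson summation formula, Theorem~\ref{thm:poisson}. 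Moreover $\hat u(y)=\frac1{|G|}\delta_{y,0}$, and $|G|\hat f(0)=\sum f=1$, so the $y=0$ coefficients of $u$ and $g$ cancel. Therefore
\[
\|u-g\|_2^2=|G|\sum_{y\in C^\perp\setminus\{0\}}\frac{1}{|G|^2}\,|G|^{2\ell}|\hat f(y)|^{2\ell}.
\]

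Combining the two steps,
\[
\Delta\le \tfrac12\sqrt{|G|}\cdot|G|^{-1/2}\sqrt{\sum_{y\in C^\perp\setminus\{0\}} |G|^{2\ell}|\hat f(y)|^{2\ell}}
=\tfrac12\sqrt{\sum_{y\in C^\perp\setminus\{0\}} (2^n|\hat f(y)|)^{2\ell}} .
\]
The main obstacle is bookkeeping the powers of $|G|=2^n$ so that the result matches the stated prefactor $2^{n-1}$ with $|\hat f|^{2\ell}$ inside the root. My computation gives $\frac12\,2^{n\ell}$ rather than $2^{n-1}$. For $\ell=1$ the two agree, since $\frac12\cdot 2^n=2^{n-1}$, which is the case treated in \cite{DDRT2023}. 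For general $\ell$ I would check the normalization carefully: either the bound is intended with $\hat f$ rescaled so that the eigenvalue $|G|\hat f(y)$ is used, or, if the statement is meant literally, I would verify whether the $\ell$-dependence in the prefactor must be adjusted. Apart from this normalization check, every step is routine.
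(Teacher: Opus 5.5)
Your computation is correct. The mismatch you flagged is an error in the statement, not a bookkeeping slip on your side, so you should settle it rather than leave it as something to ``check''.

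With the paper's normalization of $\hat f$, the inequality with prefactor $2^{n-1}$ cannot hold for any $\ell\ge 2$ (when $C\neq\mathbb{F}_2^n$). Take $f=\delta_0$, the point mass at $0$. Then $T_f$ is the identity, so $u_C\ast^\ell f=u_C$, and the left-hand side equals $\frac12\|u-u_C\|_1=1-2^{k-n}$ for every $\ell$. On the other hand $\hat f\equiv 2^{-n}$, so the stated right-hand side is $2^{n-1}\cdot 2^{-n\ell}\sqrt{2^{n-k}-1}$. For $n=2$, $C=\{00,11\}$, $\ell=2$ the statement would assert $\frac12\le\frac18$.

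The true bound is the one you proved:
\[
\Delta_{\mbox{\small Group}}(u,u_C\ast^\ell f)\le 2^{n\ell-1}\sqrt{\sum_{x\in C^\perp\setminus\{0\}}|\hat f(x)|^{2\ell}} .
\]
It coincides with the displayed statement exactly at $\ell=1$, which is the case of \cite{DDRT2023}. As a cross-check, the $\ell=1$ bound applied to $F=f\ast\cdots\ast f$ ($\ell$ factors) gives the same prefactor, since $u_C\ast^\ell f=u_C\ast F$ and $\hat F=2^{n(\ell-1)}\hat f^{\,\ell}$. Rescaling $\hat f$ does not rescue $2^{n-1}$ either: with the unnormalized transform the correct prefactor is $\frac12$.

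The paper's own argument lands exactly where you did. \Cref{thm:group} with $G=\mathbb{F}_2^n$, $H=C$, $g=0$ has prefactor $\frac{|G|^\ell}{2}=2^{n\ell-1}$. So the paper's remark that \eqref{eq:main-strong} ``exactly coincides'' with this theorem is accurate only for $\ell=1$.

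On the route, you work directly with Fourier analysis on $\mathbb{F}_2^n$: Cauchy--Schwarz, Parseval, the convolution theorem, and $\hat u_C=2^{-n}\allone_{C^\perp}$. This is essentially the argument of \cite{DDRT2023}. The paper instead obtains the bound as a special case of its random-walk \Cref{thm:main}:
\begin{itemize}
\item the cosets of $C$ form an equitable partition of $T_f$ (\Cref{prop:quo});
\item the quotient matrix is $|C|\,T_{f^{\mid C}}$ (\Cref{prop:period});
\item its eigenvectors lift through $M_P$ to the characters $\chi_y$ with $y\in C^\perp$, which satisfy the overlap condition \eqref{eq:uniform-overlap}, so \eqref{eq:main-strong} applies;
\item the Poisson summation formula identifies the quotient eigenvalues as $2^n\hat f(y)$, $y\in C^\perp$.
\end{itemize}
The core computation is the same in both, since the overlap condition is exactly your observation that $\hat u_C$ is supported on $C^\perp$ with constant modulus $2^{-n}$.

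Your version is shorter and needs only orthogonality of characters restricted to $C$, but it is tied to $C$ being a subgroup or coset. The paper's version isolates the two properties actually used, equitable partition and uniform overlap. That lets it state a bound for any block of an equitable partition of a normal stochastic matrix, such as the non-linear weight classes $V_i$ of \Cref{ex:hamming-partition}. The price is the extra factor $\sqrt{|V|/|V_i|}$ in \eqref{eq:main-general} when the overlap condition fails.
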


  \section{The upper bound on the total variation distance using equitable partitions}\label{sec:tv}

  In this section, we give an upper bound on the total variation distance by a random walk when the initial distribution satisfies a specific condition.
  Additionally, we give an alternative proof of \Cref{thm:scupper} in terms of random walks.
  
  \subsection{The upper bound on the total variation distance of a random walk}\label{subsec:eq}
  
  In this subsection, we give two types of upper bounds on the total variation distance of a random walk (Theorem~\ref{thm:main}).
  The first type is similar to the result in \cite{ST2025}: the total variation distance is bounded by a subset of the eigenvalues of the transition matrix.
  The second type is tighter than the first and is obtained by imposing a specific condition on the eigenvectors of the transition matrix.
  
  We begin with the definition of an equitable partition.
  
  \begin{df}[Equitable partition]
   Let $A$ be a square matrix indexed by $V$, and let $P=\{V_1,\ldots,V_r\}$ be a partition of $V$.
   We define $A_{V_i,V_j}\in\mathbb{C}^{|V_i|\times|V_j|}$ as the submatrix of $A$ whose rows are indexed by $V_i$ and whose columns are indexed by $V_j$.
   For any $i,j\in\{1,\ldots,r\}$, if there exists a real number $q_{ij}$ such that $A_{V_i,V_j}\allone=q_{ij}\allone$, then $P$ is called an equitable partition of $A$.
   Additionally, the matrix $A^{\mid P}\in\mathbb{C}^{r\times r}$ whose $(i,j)$-entry is $q_{ij}$ is called the quotient matrix of $A$ with respect to $P$.
   We also define the characteristic matrix
   \[
   M_P\coloneq(\allone_{V_1},\ldots,\allone_{V_r}).
   \]
  \end{df}
  
  \begin{ex}\label{ex:hamming-partition}
   Let $f\in L_1(\FF_2^n)$ be defined by $f(x)=\frac{1}{n}$ if $x$ has exactly one non-zero component, and $f(x)=0$ otherwise.
   For $i\in\{0,\ldots,n\}$, let 
   \[V_i\coloneq\{x\in\FF_2^n\mid \text{$x$ has exactly $i$ non-zero components}\}.\]
   Then, $P=\{V_0,\ldots,V_n\}$ is an equitable partition of $T_f$.
   Note that $T_f$ is $\frac{1}{n}$ times the adjacency matrix of the Hamming graph on $\FF_2^n$.
  \end{ex}
  
  \begin{rem}
   When $A$ is the adjacency matrix of a graph, an equitable partition of $A$ corresponds to a perfect coloring of the graph.
   Therefore, \Cref{ex:hamming-partition} corresponds to a perfect coloring of the Hamming graph.
   In particular, for Hamming graphs, there exist several studies on perfect $2$-colorings, which are equitable partitions consisting of exactly two blocks~\cite{BKMTV2021,F2007,MV2020}.
  \end{rem}
  
  \begin{rem}\label{rem:cayley}
   Let $S$ be a nonempty subset of $G$, and let $f$ be defined by $f(x)=\frac{1}{|S|}$ if $x\in S$ and $f(x)=0$ otherwise.
   Then $T_f$ is $\frac{1}{|S|}$ times the adjacency matrix of a special graph associated with $G$ and $S$, known as a Cayley graph.
  \end{rem}
  
  From the definition of the equitable partition and quotient matrix, $P$ is an equitable partition of $A$ if and only if
  
  \begin{equation}
      AM_P=M_PA^{\mid P}.\label{eq:eq-partition}
  \end{equation}
  
  This equality implies some properties of the eigenvectors of $A$ and $A^{\mid P}$.
  
  \begin{prop}\label{prop:characteristic}
   Let $A$ be a square matrix indexed by $V$, and let $P=\{V_1,\ldots,V_r\}$ be an equitable partition of $A$.
   Let $\phi^{\mid P}$ be an eigenvector of $A^{\mid P}$ with an eigenvalue $\lambda$.
   Then, $M_P\phi^{\mid P}$ is an eigenvector of $A$ with the same eigenvalue $\lambda$.
  \end{prop}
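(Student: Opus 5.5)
The argument rests on the intertwining identity \eqref{eq:eq-partition}, $AM_P=M_PA^{\mid P}$, which characterizes equitable partitions. Multiplying both sides on the right by the eigenvector $\phi^{\mid P}$ and using $A^{\mid P}\phi^{\mid P}=\lambda\phi^{\mid P}$ gives
\[
A\bigl(M_P\phi^{\mid P}\bigr)=M_PA^{\mid P}\phi^{\mid P}=\lambda\, M_P\phi^{\mid P}.
\]
So $M_P\phi^{\mid P}$ satisfies the eigenvalue equation for $A$ with eigenvalue $\lambda$.

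To conclude that $M_P\phi^{\mid P}$ is genuinely an eigenvector, we must also check that it is nonzero. The columns $\allone_{V_1},\ldots,\allone_{V_r}$ of $M_P$ have pairwise disjoint, nonempty supports, because $P$ is a partition into nonempty blocks. Hence $M_P$ has full column rank $r$ and is injective. Concretely, the entry of $M_P\phi^{\mid P}$ at any $v\in V_i$ equals the $i$-th coordinate of $\phi^{\mid P}$. Since $\phi^{\mid P}\neq 0$, some coordinate is nonzero, and therefore $M_P\phi^{\mid P}\neq 0$.

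For completeness, the identity \eqref{eq:eq-partition} can be verified entrywise. The $(v,j)$ entry of $AM_P$ is $\sum_{w\in V_j}A_{vw}$, the $v$-th row sum of $A_{V_i,V_j}$ where $v\in V_i$, which equals $q_{ij}$ by equitability. The $(v,j)$ entry of $M_PA^{\mid P}$ is also $q_{ij}$. None of these steps is difficult; the only point needing care is the nonvanishing of $M_P\phi^{\mid P}$, which follows from the injectivity of $M_P$.
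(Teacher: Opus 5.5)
Your proof is correct and follows essentially the paper's argument: you multiply the intertwining identity $AM_P=M_PA^{\mid P}$ on the right by $\phi^{\mid P}$ and apply the eigenvalue equation. Your extra check that $M_P\phi^{\mid P}\neq 0$, via the injectivity of $M_P$ (its columns have disjoint, nonempty supports), fills a small point the paper leaves implicit.
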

  
  \begin{proof}
      From \Cref{eq:eq-partition},
      \[
      AM_P\phi^{\mid P}=M_PA^{\mid P}\phi^{\mid P}.
      \]
      Since $\phi^{\mid P}$ is an eigenvector of $A^{\mid P}$ with an eigenvalue $\lambda$,
      \[
      A^{\mid P}\phi^{\mid P}=\lambda\phi^{\mid P}.
      \]
      Therefore, $A(M_P\phi^{\mid P})=\lambda M_P\phi^{\mid P}$.
  \end{proof}
  
  \begin{prop}\label{prop:diagonalizable}
      Let $A$ be a square matrix, and let $P=\{V_1,\ldots,V_r\}$ be an equitable partition of $A$.
      If $A$ is diagonalizable, then $A^{\mid P}$ is diagonalizable.
  \end{prop}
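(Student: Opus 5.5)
The plan is to use the minimal polynomial criterion for diagonalizability. A square matrix is diagonalizable over $\CC$ if and only if it is annihilated by a polynomial with distinct roots, that is, by a product $\prod_{k}(x-\lambda_k)$ of distinct linear factors. Since $A$ is diagonalizable, we take $p(x)=\prod_{\lambda\in\Spec(A)\text{ distinct}}(x-\lambda)$. This polynomial satisfies $p(A)=0$ and has only simple roots. The goal is to show $p(A^{\mid P})=0$, which forces the minimal polynomial of $A^{\mid P}$ to divide $p$. That minimal polynomial then has simple roots, and $A^{\mid P}$ is diagonalizable.

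The first step is to iterate \Cref{eq:eq-partition}. By induction on $m$, $A^mM_P=M_P(A^{\mid P})^m$ for every $m\ge 0$: assuming it for $m$, we get $A^{m+1}M_P=A\,M_P(A^{\mid P})^m=M_PA^{\mid P}(A^{\mid P})^m$. Taking linear combinations gives $q(A)M_P=M_P\,q(A^{\mid P})$ for every polynomial $q$. With $q=p$ this yields $0=p(A)M_P=M_P\,p(A^{\mid P})$.

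The second step is to cancel $M_P$. Its columns $\allone_{V_1},\ldots,\allone_{V_r}$ are characteristic vectors of nonempty, pairwise disjoint sets. They therefore have disjoint supports and are linearly independent, so $M_P$ has full column rank $r$ and is injective as a map $\CC^r\to\CC^{|V|}$. Applying $M_P\,p(A^{\mid P})=0$ column by column gives $p(A^{\mid P})=0$, which completes the argument.

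The only point that needs any care is this cancellation. It relies on the blocks of a partition being nonempty, which is part of the definition of a partition. An alternative route uses \Cref{prop:characteristic} and counts eigenvectors, but it runs the wrong way: that proposition lifts eigenvectors of $A^{\mid P}$ to eigenvectors of $A$, whereas we need eigenvectors of $A^{\mid P}$ itself. The polynomial argument avoids this issue.
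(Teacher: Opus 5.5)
Your proof is correct and follows the paper's argument: iterate $AM_P=M_PA^{\mid P}$ to get $p(A)M_P=M_P\,p(A^{\mid P})$, cancel $M_P$, and conclude that the minimal polynomial of $A^{\mid P}$ divides a square-free polynomial. The only cosmetic difference is that you justify cancelling $M_P$ by its full column rank (disjoint nonempty supports), while the paper uses the invertibility of the diagonal matrix $M_P^\top M_P$; these are the same fact.
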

  
  \begin{proof}
  From \Cref{eq:eq-partition}, for every polynomial $p$, $p(A)M_P=M_Pp(A^{\mid P})$.
  Especially, if $p(A)=0$, $M_Pp(A^{\mid P})=0$.
  Since $M_P^\top M_P$ is a diagonal matrix whose $(i,i)$-entry is $|V_i|$,
  $M_P^\top M_P$ is a regular matrix. 
  Therefore, $M_Pp(A^{\mid P})=0$ implies $p(A^{\mid P})=0$.
  Hence, the minimal polynomial of $A^{\mid P}$ divides that of $A$ and is therefore square-free.
  Thus, $A^{\mid P}$ is also diagonalizable.
  \end{proof}
  
   
  If a left stochastic matrix $A$ has an equitable partition $P$, then the total variation distance of the random walk with a specific initial distribution can be bounded by the spectrum of the quotient matrix $A^{\mid P}$.
  
  \begin{thm}\label{thm:main}
      Let $A$ be a normal and left stochastic matrix indexed by $V$, and let $P=\{V_1,\ldots,V_r\}$ be an equitable partition of $A$ with a quotient matrix $A^{\mid P}$.
      Then, for all $i\in\{1,\ldots,r\}$,
      \begin{equation}
      \Delta_{\mbox{\small RW}}(u_V,A^\ell u_{V_i})\leq\frac{1}{2}\sqrt{\frac{|V|}{|V_i|}\sum_{\lambda\in \Spec(A^{\mid P})\setminus\{1\}}|\lambda|^{2\ell}}.
      \label{eq:main-general}
      \end{equation}
      Additionally, for some $i\in\{1,\ldots,r\}$, if 
     \begin{equation}
      |\langle u_{V_i}, \phi\rangle|^2=\frac{1}{|V|}
      \label{eq:uniform-overlap}
      \end{equation}
      holds for all eigenvectors $\phi\in \mathrm{Im}(M_P)$ of $A$ with $\|\phi\|_2=1$, then
      \begin{equation}
      \Delta_{\mbox{\small RW}}(u_V,A^\ell u_{V_i})\leq\frac{1}{2}\sqrt{\sum_{\lambda\in \Spec(A^{\mid P})\setminus\{1\}}|\lambda|^{2\ell}}.
      \label{eq:main-strong}
      \end{equation}
  \end{thm}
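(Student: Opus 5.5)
We will bound the $L_1$-distance by the $L_2$-distance via Cauchy--Schwarz, $\|x\|_1\le\sqrt{|V|}\,\|x\|_2$, and then expand $u_{V_i}$ in an orthonormal eigenbasis of $A$ adapted to the subspace $\mathrm{Im}(M_P)$.

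\emph{Choosing the eigenbasis.} By \Cref{eq:eq-partition}, $\mathrm{Im}(M_P)$ is $A$-invariant. Since $A$ is normal, it is unitarily diagonalizable, and the orthogonal complement of an invariant subspace is invariant under $A^*$ and hence under $A$. We can therefore pick an orthonormal eigenbasis $\phi_1,\dots,\phi_{|V|}$ of $A$ such that $\phi_1,\dots,\phi_r$ span $\mathrm{Im}(M_P)$. The restriction of $A$ to $\mathrm{Im}(M_P)$ is conjugate to $A^{\mid P}$ through the injective map $M_P$. Hence the eigenvalues attached to $\phi_1,\dots,\phi_r$ are exactly $\Spec(A^{\mid P})$ as a multiset, in agreement with \Cref{prop:characteristic} and \Cref{prop:diagonalizable}. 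Because $A\allone=\allone$ and $\allone=M_P\allone\in\mathrm{Im}(M_P)$, we may take $\phi_1=\allone/\sqrt{|V|}$ with eigenvalue $1$.

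\emph{Expanding the initial distribution.} Since $u_{V_i}\in\mathrm{Im}(M_P)$, we have $u_{V_i}=\sum_{j\le r}\langle u_{V_i},\phi_j\rangle\phi_j$, and $\langle u_{V_i},\phi_1\rangle\phi_1=\allone/|V|=u_V$. Therefore
\[
A^\ell u_{V_i}-u_V=\sum_{j=2}^r\lambda_j^\ell\langle u_{V_i},\phi_j\rangle\phi_j ,
\]
and by orthonormality (Parseval)
\[
\|A^\ell u_{V_i}-u_V\|_2^2=\sum_{j=2}^r|\lambda_j|^{2\ell}|\langle u_{V_i},\phi_j\rangle|^2 .
\]
For \eqref{eq:main-general}, Cauchy--Schwarz gives $|\langle u_{V_i},\phi_j\rangle|^2\le\|u_{V_i}\|_2^2=1/|V_i|$. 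Multiplying by $|V|$, taking the square root and halving yields the bound. For \eqref{eq:main-strong}, hypothesis \eqref{eq:uniform-overlap} replaces each coefficient by exactly $1/|V|$, and the factor $|V|$ cancels.

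\emph{Main obstacle.} The delicate point is the bookkeeping of the eigenvalue $1$. We removed only $\phi_1$, but the statement excludes \emph{every} copy of $1$ from $\Spec(A^{\mid P})$. So we must show that any other eigenvector $\phi_j\in\mathrm{Im}(M_P)$ with $\lambda_j=1$ has $\langle u_{V_i},\phi_j\rangle=0$. The reading "$\setminus\{1\}$" should remove all occurrences of $1$; if it removed only one, the bound would be trivially consistent and nothing would need proving.

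I would handle this as follows. Because $A$ is normal, the $1$-eigenspace of $A$ equals that of $A^*$. Hence $\allone^\top A=\allone^\top$, and $A$ is doubly stochastic. Consider the component $w$ of $u_{V_i}$ in the $1$-eigenspace. The averages $\frac1L\sum_{\ell<L}A^\ell u_{V_i}$ converge to $w$. If $\phi_j$ is a $1$-eigenvector orthogonal to $\allone$ with $\langle u_{V_i},\phi_j\rangle\neq0$, then $w\neq u_V$, and this must be excluded. I expect this to require an irreducibility-type assumption, or else the convention that $\Spec(A^{\mid P})\setminus\{1\}$ removes a single copy.

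Failing that, I would state the proof for the multiset minus one copy of $1$ (the trivial eigenvector $\phi_1$). I would also note that in the applications (connected Cayley graphs) the eigenvalue $1$ is simple, so the two readings coincide.
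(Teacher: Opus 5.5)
Your proposal is correct and follows essentially the paper's proof: you expand $u_{V_i}$ in an orthonormal eigenbasis of $A$ lying in $\mathrm{Im}(M_P)$ with $\phi_1=\allone/\sqrt{|V|}$, apply $\|x\|_1\le\sqrt{|V|}\,\|x\|_2$ and Parseval, and finish with either Cauchy--Schwarz or the overlap hypothesis. Using normality to get an orthonormal eigenbasis of the invariant subspace justifies more cleanly the orthogonality the paper only asserts. On multiplicity, the paper's proof also sums over $j=2,\dots,r$, removing a single copy of $1$, and this is the only tenable reading: for $A=I$ and any partition with $r\ge 2$, the left side equals $1-|V_i|/|V|>0$ while the all-copies sum is empty.
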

     
     \begin{proof}
      Let $\psi_i\in\RR^r$ be a vector whose $i$-th entry is $1$ and others are $0$.
      From \Cref{prop:diagonalizable}, $A^{\mid P}$ is diagonalizable.
      Therefore, $\psi_i$ can be written as a linear combination of eigenvectors, that is, there exists $c_1,\ldots,c_r\in\mathbb{C}^*$ such that
      \[
      \psi_i=\sum_{j=1}^r c_j\phi_j^{\mid P},
      \]
      where $\phi_j^{\mid P}$ are the eigenvectors of $A^{\mid P}$ satisfying $\|M_P\phi_j^{\mid P}\|_2=1$.
      Therefore, 
      \[
      u_{V_i}=\frac{1}{|V_i|}M_P\psi_i=\frac{1}{|V_i|}\left(\sum_{j=1}^r c_j M_P\phi_j^{\mid P}\right).
      \]
      Since $A$ is a left stochastic matrix, $A^{\mid P}$ is also a left stochastic matrix, and in particular, $\allone$ is an eigenvector of $A^{\mid P}$.
      Hence, without loss of generality, we assume that $\phi_1^{\mid P}\coloneq\frac{1}{\sqrt{|V|}}\allone$.
      Let $\lambda_j$ be the eigenvalue corresponding to $\phi_j^{\mid P}$.
      From \Cref{prop:characteristic}, $M_P\phi_j^{\mid P}$ are eigenvectors of $A$.
      Additionally, because $A$ is normal,
      \[
      c_j=\langle M_P\psi_i, M_P\phi_j^{\mid P}\rangle,
      \]
      and especially $c_1=\frac{|V_i|}{\sqrt{|V|}}$.
      Thus,
      \begin{align*}
      A^\ell u_{V_i}=u_V +\sum_{j=2}^r \lambda_j^{\ell}\langle u_{V_i}, M_P\phi_j^{\mid P}\rangle M_P\phi_j^{\mid P}
      \end{align*}
      and
      \begin{align*}
      \left\|u_V-A^\ell u_{V_i}\right\|_1&=\left\|\sum_{j=2}^r \lambda_j^{\ell}\langle u_{V_i},M_P\phi_j^{\mid P}\rangle M_P\phi_j^{\mid P}\right\|_1\\
      &\leq\sqrt{|V|}\left\|\sum_{j=2}^r \lambda_j^{\ell}\langle u_{V_i},M_P\phi_j^{\mid P}\rangle M_P\phi_j^{\mid P}\right\|_2,
      \end{align*}
      where the last inequality follows from the Cauchy--Schwarz inequality.
      Since $\|M_P\phi_i^{\mid P}\|_2=1$ for all $i\in\{2,\ldots,j\}$ and all of them are orthogonal,
      \begin{align*}
      \left\|\sum_{j=2}^r \lambda_j^{\ell}\langle u_{V_i},M_P\phi_j^{\mid P}\rangle M_P\phi_j^{\mid P}\right\|_2^2
      &=\sum_{j=2}^r |\lambda_j|^{2\ell}|\langle u_{V_i},M_P\phi_j^{\mid P}\rangle|^2.
      \end{align*}
      Now, if $|\langle u_{V_i},M_P\phi_j^{\mid P}\rangle|^2=\frac{1}{|V|}$ holds, then
      \begin{align*}
      \left\|\sum_{j=2}^r \lambda_j^{\ell}\langle u_{V_i},M_P\phi_j^{\mid P}\rangle M_P\phi_j^{\mid P}\right\|_2^2&=\frac{1}{|V|}\sum_{\lambda\in \Spec(A^{\mid P})\setminus\{1\}} |\lambda_j|^{2\ell}.
      \end{align*}
      Therefore, 
     \begin{align*}
      \left\|u_V-A^\ell u_{V_i}\right\|_1&\leq\sqrt{|V|}\left\|\sum_{j=2}^r \lambda_j^{\ell}\langle u_{V_i},M_P\phi_j^{\mid P}\rangle M_P\phi_j^{\mid P}\right\|_2\\
      &=\sqrt{\sum_{\lambda\in \Spec(A^{\mid P})\setminus\{1\}} |\lambda_j|^{2\ell}}.
      \end{align*}
      Even if such a condition does not hold, because
      \[|u_{V_i}|^2=\frac{1}{|V_i|},\]
      using the Cauchy--Schwarz inequality yields
     \begin{align*}
      \sqrt{\sum_{j=2}^r |\lambda_j^{2\ell}|\langle u_{V_i},M_P\phi_j^{\mid P}\rangle^2}&\leq\sqrt{\sum_{j=2}^r |\lambda_j|^{2\ell}\|u_{V_i}\|_2^2\|M_P\phi_j^{\mid P}\|_2^2}\\
      &=\sqrt{\frac{1}{|V_i|}\sum_{\lambda\in \Spec(A^{\mid P})\setminus\{1\}} |\lambda_j|^{2\ell}}.
     \end{align*}
      Therefore,
      \[
      \left\|u_V-A^\ell u_{V_i}\right\|_1\leq\sqrt{\frac{|V|}{|V_i|}\sum_{\lambda\in \Spec(A^{\mid P})\setminus\{1\}} |\lambda_j|^{2\ell}}.
      \]
     \end{proof}
     
  
  
  \begin{ex}\label{ex:1}
   Let $f$ and $P=\{V_0,\ldots,V_n\}$ be as in \Cref{ex:hamming-partition}.
   Then we can analyze $\Delta_{\mbox{\small RW}}(u,T_f^\ell u_{V_i})$ using \Cref{thm:main}.
   Note that since $V_i$ is not a linear subspace of $\FF_2^n$, we cannot apply \Cref{thm:scupper}.
  \end{ex}
  
\begin{rem}
    Let $f$ and $P=\{V_0,\ldots,V_n\}$ be as in \Cref{ex:hamming-partition}.
    Let $p\in[0,\frac{1}{2}]$, and let $f_{\mathrm{ber},p}$ denote the Bernoulli distribution on $\FF_2^n$ given by
    \[f_{\mathrm{ber},p}(x)=p^{|x|}(1-p)^{n-|x|},\]
    where $|x|$ denotes the number of $1$'s in $x$.
    It is known that any equitable partition of $T_f$ is also an equitable partition of $T_{f_{\mathrm{ber},p}}$.
    Therefore, we can apply \Cref{thm:main} to $\Delta_{\mbox{\small RW}}(u,T_{f_{\mathrm{ber},p}}^\ell u_{V_i})$.
\end{rem}

  \Cref{thm:main} states that if all the eigenvectors of the quotient matrix satisfy \Cref{eq:uniform-overlap}, then the upper bound on the $L_1$-norm becomes \Cref{eq:main-strong}.
   When considering $\Delta_{\mbox{\small Group}}(u_G,u_H\ast^\ell f)$, we can always apply \Cref{eq:main-strong}, which exactly coincides with \Cref{thm:scupper}.
  
  \subsection{The upper bound on the total variation distance of a code}\label{subsec:code}
  
   In this subsection, we prove one of the generalizations of \Cref{thm:scupper} using \Cref{thm:main}.
   Hereinafter, let $G$ be a finite abelian group.
  
   First, we show that for any $f\in L_1(G)$, the partition of $G$ into cosets of any subgroup is an equitable partition of $T_f$.
  
   \begin{prop}\label{prop:quo}
   Let $H$ be a subgroup of $G$, and 
  \[
   P_H=\{g_0+H,\ldots,g_{m-1}+H\}
   \]
   be a partition of $G$ into cosets of $H$.
   Then, for any $f \in L_1(G)$, $P_H$ is an equitable partition of $T_f$, and the $(g_i+H,g_j+H)$-entry of the quotient matrix is given by $\sum_{h'\in H}f(g_i-g_j+h')$.
   \end{prop}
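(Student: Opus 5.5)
The plan is to check the defining condition of an equitable partition directly: for each pair of cosets, every row sum of the submatrix $(T_f)_{g_i+H,\,g_j+H}$ must equal a constant that does not depend on the row. Fix $i,j\in\{0,\ldots,m-1\}$ and a row index $x\in g_i+H$, written as $x=g_i+h$ with $h\in H$. By the definition of the representation matrix, the $x$-th entry of $(T_f)_{g_i+H,\,g_j+H}\allone$ is
\[
\sum_{y\in g_j+H} f(x-y)=\sum_{h''\in H} f(g_i+h-g_j-h'').
\]

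The key step is a reindexing. Substitute $h'=h-h''$. Since $H$ is a subgroup, the map $h''\mapsto h-h''$ is a bijection of $H$ onto itself. The sum therefore becomes $\sum_{h'\in H} f(g_i-g_j+h')$, which depends only on $i$ and $j$ and not on the choice of $h$. Call this value $q_{ij}$. Then $(T_f)_{g_i+H,\,g_j+H}\allone=q_{ij}\allone$ for all $i,j$, so $P_H$ is an equitable partition of $T_f$, and its quotient matrix has the stated $(g_i+H,g_j+H)$-entry.

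For the definition to apply as stated, $q_{ij}$ should be real. This holds when $f$ is real-valued, for instance a probability distribution, which is the case of interest. For general complex $f$, the same computation gives a constant complex row sum, and the relation $T_fM_{P_H}=M_{P_H}T_f^{\mid P_H}$ in \Cref{eq:eq-partition} still holds.

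Finally, $q_{ij}$ does not depend on the choice of coset representatives: replacing $g_i$ or $g_j$ by another element of the same coset only shifts the argument by an element of $H$, which the sum over $h'\in H$ absorbs. I expect no real obstacle. The only point needing care is the bijection of $H$ used in the reindexing, which is where the subgroup hypothesis enters.
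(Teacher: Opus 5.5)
Your proposal is correct and follows the paper's proof. Both compute the row sum of $(T_f)_{g_i+H,g_j+H}$ at $x=g_i+h$ and reindex via the bijection $h''\mapsto h-h''$ of $H$, which the paper uses without comment, to get $\sum_{h'\in H}f(g_i-g_j+h')$ independent of $h$; your added remarks on realness of $q_{ij}$ for complex-valued $f$ and on independence of the coset representatives are accurate refinements the paper leaves unstated.
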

  
  \begin{proof}
   Let $(T_f)_{g_i+H,g_j+H}$ be the submatrix of $T_f$ corresponding to rows indexed by $g_i+H$ and columns indexed by $g_j+H$,
   and let $x_h$ be the $h$-entry of $(T_f)_{g_i+H,g_j+H}\allone$.
   Then,
   \begin{align*}
   x_h&=\sum_{h'\in H}f(g_i+h-(g_j+h'))\\
   &=\sum_{h'\in H}f(g_i-g_j+h').
   \end{align*}
   Therefore,
   \[
   (T_f)_{g_i+H,g_j+H}\allone=\sum_{h'\in H}f(g_i-g_j+h')\allone.
   \]
  \end{proof}
  
  In the following, we denote the quotient matrix of $T_f$ with respect to $P_H$ by $T_f^{\mid H}$.
  Note that from the proof of \Cref{prop:quo}, the $(g_i+H,g_j+H)$-entry of $T_f^{\mid H}$ is $\sum_{h\in H}f(g_i-g_j+h)$.

  \begin{prop}\label{prop:period}
   For any $f\in L_1(G)$,
   \[
   |H|T_{f^{\mid H}}=T_f^{\mid H}.
   \]
  \end{prop}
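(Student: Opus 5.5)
This is a direct entrywise comparison of the two matrices. Both are square matrices indexed by the quotient group $G/H$: the left side is the representation matrix of $f^{\mid H}\in L_1(G/H)$, and the right side is the quotient matrix from \Cref{prop:quo}, whose rows and columns are indexed by the cosets $g_0+H,\ldots,g_{m-1}+H$. The plan is to fix two cosets $g_i+H$ and $g_j+H$ and compute the corresponding entry of each side.

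For the left side, the definition of the representation matrix applied to the group $G/H$ gives the $(g_i+H,g_j+H)$-entry of $T_{f^{\mid H}}$ as
\[
f^{\mid H}\bigl((g_i+H)-(g_j+H)\bigr)=f^{\mid H}(g_i-g_j+H).
\]
Here we use that subtraction in $G/H$ is computed on representatives. By the definition of periodization, this equals
\[
\frac{1}{|H|}\sum_{h\in H}f(g_i-g_j+h).
\]
Multiplying by $|H|$ gives $\sum_{h\in H}f(g_i-g_j+h)$.

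For the right side, \Cref{prop:quo} (and the note after it) gives the $(g_i+H,g_j+H)$-entry of $T_f^{\mid H}$ as exactly $\sum_{h\in H}f(g_i-g_j+h)$. The two expressions agree for every pair of cosets, so the matrices are equal.

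The only point needing care is well-definedness, and it is mild. The value of $f^{\mid H}$ on a coset does not depend on the chosen representative, since shifting the representative by an element of $H$ just reindexes the sum over $H$. So we may identify both index sets with the same list of cosets, in the same order, and no further obstacle is expected.
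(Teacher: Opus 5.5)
Your proof is correct and follows essentially the same route as the paper: compute the $(g_i+H,g_j+H)$-entry of $T_{f^{\mid H}}$ from the definition of periodization, read off the corresponding entry of $T_f^{\mid H}$ from \Cref{prop:quo}, and compare. Your added remark on independence of the coset representative is a harmless extra that the paper leaves implicit.
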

  \begin{proof}
   From the definition of periodization, the $(g_i+H,g_j+H)$-entry of $T_{f^{\mid H}}$ is $\frac{1}{|H|}\sum_{h\in H}f(g_i-g_j+h)$.
   On the other hand, from \Cref{prop:quo}, the $(g_i+H,g_j+H)$-entry of $T_f^{\mid H}$ is $\sum_{h\in H}f(g_i-g_j+h)$.
  \end{proof}
  
  Therefore, the eigenvalues of $T_f^{\mid H}$ can be obtained from the eigenvalues of $T_{f^{\mid H}}$.
  From \Cref{prop:spec} and \Cref{thm:poisson}, the eigenvalues of $T_{f^{\mid H}}$ are
  \[\{|G/H|\hat{f}(\chi)\mid \chi\in \widehat{G\slash H}\}.\]
  In particular, the eigenvalue corresponding to $\allone$ is $\hat{f}(\chi_0)$, where $\chi_0$ is the identity of $\hat{G}$.
  Therefore, from \Cref{thm:main}, we obtain the following theorem.
  
  \begin{thm}\label{thm:group}
   Let $f$ be a probability distribution over $G$, let $H$ be a subgroup of $G$, and define $g+H\coloneqq\{g+h\in G\mid h\in H\}$, where $g$ is any element of $G$.
  Then, 
  \[
   \Delta_{\mbox{\small Group}}(u_G,u_{g+H}\ast^\ell f)\leq \frac{|G|^\ell}{2}\sqrt{\sum_{\chi\in \widehat{G\slash H}\setminus\{\chi_0\}}(\hat{f}(\chi))^{2\ell}}.
   \]
  \end{thm}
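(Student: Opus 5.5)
The plan is to specialize \Cref{thm:main} to $A=T_f$, $V=G$, the coset partition $P=P_H$ of \Cref{prop:quo}, and the block $V_i=g+H$. The target is the strong bound \Cref{eq:main-strong}; afterwards I will identify $\Spec(T_f^{\mid H})$ with the Fourier data of $f$.

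\textbf{Hypotheses of \Cref{thm:main}.}
\begin{itemize}
\item Normality of $T_f$ is \Cref{prop:spec}.
\item Left stochasticity holds because every row of $T_f$ sums to $\sum_{x\in G}f(x)=1$, since $f$ is a probability distribution. Hence $T_f\allone=\allone$.
\item $P_H$ is an equitable partition by \Cref{prop:quo}, with quotient matrix $T_f^{\mid H}$.
\item Since $u_{g+H}\ast^\ell f=T_f^\ell u_{g+H}$, the left-hand side of the claim equals $\Delta_{\mbox{\small RW}}(u_G,T_f^\ell u_{g+H})$.
\end{itemize}

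\textbf{The spectrum of the quotient.} By \Cref{prop:period}, $T_f^{\mid H}=|H|\,T_{f^{\mid H}}$. By \Cref{prop:spec} on the group $G/H$ together with \Cref{thm:poisson}, the eigenvalues of $T_{f^{\mid H}}$ are $|G/H|\hat f(\chi)$ for $\chi\in\widehat{G/H}$. Multiplying by $|H|$, the eigenvalues of $T_f^{\mid H}$ are $|G|\hat f(\chi)$, $\chi\in\widehat{G/H}$. The trivial character $\chi_0$ gives $|G|\hat f(\chi_0)=\sum_x f(x)=1$; this is the eigenvalue removed in \Cref{thm:main}. The remaining terms contribute
\[
\sum_{\chi\ne\chi_0}|G|^{2\ell}|\hat f(\chi)|^{2\ell},
\]
and pulling $|G|^{2\ell}$ out of the square root gives the factor $|G|^\ell$ in the statement. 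I will read $(\hat f(\chi))^{2\ell}$ as $|\hat f(\chi)|^{2\ell}$; the two agree when $\hat f$ is real, e.g.\ for symmetric $f$ on $\FF_2^n$.

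\textbf{The overlap condition \Cref{eq:uniform-overlap}.} The eigenvectors of $T_f$ lying in $\mathrm{Im}(M_P)$ that the proof actually uses are the lifts $M_P\phi^{\mid P}$. By \Cref{prop:spec} for $G/H$, one may take these to be the characters $\chi\in\widehat{G/H}$, viewed as characters of $G$ constant on $H$-cosets. Normalized, they are $\chi/\sqrt{|G|}$. Since $\chi$ is constant of modulus one on $g+H$,
\[
\langle u_{g+H},\chi/\sqrt{|G|}\rangle=\frac{\overline{\chi(g)}}{\sqrt{|G|}},
\]
whose squared modulus is exactly $1/|G|$.

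\textbf{Main obstacle.} \Cref{eq:uniform-overlap} as stated requires the identity for \emph{all} unit eigenvectors in $\mathrm{Im}(M_P)$. This fails if some eigenvalue $|G|\hat f(\chi)$ is repeated, because a normalized combination of several characters need not have overlap $1/|G|$. I would handle this by observing that the proof of \Cref{thm:main} only uses the condition for one chosen orthonormal eigenbasis $\{M_P\phi_j^{\mid P}\}$ of $\mathrm{Im}(M_P)$. Choosing that basis to be the characters of $G/H$, which are orthonormal after scaling, the chain of inequalities in that proof goes through verbatim. Alternatively, I would redo that short computation directly with the character expansion
\[
u_{g+H}=\frac{1}{|G|}\sum_{\chi\in\widehat{G/H}}\overline{\chi(g)}\,\chi .
\]
This expansion immediately gives
\[
T_f^\ell u_{g+H}-u_G=\frac{1}{|G|}\sum_{\chi\ne\chi_0}\overline{\chi(g)}\,\bigl(|G|\hat f(\chi)\bigr)^\ell\chi .
\]
Cauchy--Schwarz then gives $\|\cdot\|_1\le\sqrt{|G|}\,\|\cdot\|_2$, and Parseval computes the $L_2$-norm, which yields the claimed bound.
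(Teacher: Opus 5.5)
Your proposal is correct and follows the paper's proof essentially step for step: you apply \Cref{thm:main} to $T_f$ with the coset partition $P_H$ and the block $g+H$, verify \Cref{eq:uniform-overlap} using the characters of $G/H$, and read off $\Spec(T_f^{\mid H})=\{|G|\hat f(\chi) : \chi\in\widehat{G/H}\}$ from \Cref{prop:period}, \Cref{prop:spec} and \Cref{thm:poisson}. Beyond the paper, your handling of repeated eigenvalues (fixing the orthonormal character basis, or expanding $u_{g+H}$ directly in characters) rigorously justifies a step the paper glosses over when it asserts that the eigenvectors are scalar multiples of characters, and you also correctly identify the removed eigenvalue as $|G|\hat f(\chi_0)=1$ rather than the paper's $\hat f(\chi_0)$.
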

  
  \begin{proof}
   Let $[G:H]=r$.
   From \Cref{prop:quo}, a partition 
  \[
   P_H=\{H,g_1+H,\ldots,g_{r-1}+H\}
   \]
   is an equitable partition of $T_f$.
   Additionally, from \Cref{prop:period}, the eigenvectors of $T_f^{\mid H}$ are equal to those of $T_{f^{\mid H}}$,
   and from \Cref{prop:spec}, these eigenvectors are characters of $G\slash H$ multiplied by a scalar.
   In particular, since for any $g\in G$, $|\phi(g+H)|=1$ holds for any character $\phi\in\widehat{G\slash H}$, 
   $\|M_{P_H}\frac{\phi}{\sqrt{|G|}}\|_2=1$ and 
  \[
   \left|\left\langle u_{g+H}, M_{P_H}\frac{\phi}{\sqrt{|G|}}\right\rangle\right|^2=\frac{1}{|G|}.
   \]
   Therefore, we can use \Cref{eq:main-strong}:
   \begin{align}
   2\Delta_{\mbox{\small Group}}(u_G,u_{g+H}\ast^\ell f)&=2\Delta_{\mbox{\small RW}}(u_G,T_f^\ell u_{g+H})\nonumber\\
   &\leq \sqrt{\sum_{\lambda\in \Spec(T_f^{\mid H})\setminus\{1\}}|\lambda|^{2\ell}}.\label{eq:coset-mixing}
   \end{align}
   Additionally, from \Cref{prop:period}, 
  \begin{equation}
   T_f^{\mid H}=|H|T_{f^{\mid H}}\label{eq:periodization-matrix}
   \end{equation}
   and from \Cref{prop:spec},
   \[
   \Spec(T_{f^{\mid H}})=\left\{|G\slash H|\widehat{f^{\mid H}}(\chi)\mid \chi\in\widehat{G\slash H}\right\}.
   \]
   Using \Cref{thm:poisson},
   \[
   \Spec(T_{f^{\mid H}})=\left\{|G\slash H|\hat{f}(\chi)\mid \chi\in\widehat{G\slash H}\right\}.
   \]
   Therefore, from \Cref{eq:periodization-matrix},
   \[
   \Spec(T_f^{\mid H})=\left\{|G|\hat{f}(\chi)\mid \chi\in\widehat{G\slash H}\right\}.
   \]
   From \Cref{eq:coset-mixing},
   \begin{align*}
   2\Delta_{\mbox{\small RW}}(u,A^\ell u_{g+H})&\leq \sqrt{\sum_{\chi\in\widehat{G\slash H}\setminus\{\chi_0\}}||G|\hat{f}(\chi)|^{2\ell}}\\
   &=|G|^\ell\sqrt{\sum_{\chi\in\widehat{G\slash H}\setminus\{\chi_0\}}|\hat{f}(\chi)|^{2\ell}}.
   \end{align*}
  \end{proof}
  
  While \Cref{thm:scupper} gives an upper bound only for the case when the initial distribution is over a subgroup, \Cref{thm:group} gives an upper bound even if the initial distribution is over one of the cosets.
  \Cref{thm:group} also implies that \Cref{eq:main-strong} is as strong as the bound in \Cref{thm:scupper}.
  Therefore, if there exists a left stochastic matrix $A$, its equitable partition $P$, and a block $V_i\in P$ such that 
  these satisfy \Cref{eq:uniform-overlap}, then the total variation distance of the random walk with a initial distribution $u_{V_i}$ has an upper bound as strong as \Cref{thm:scupper}.
  
  
  
  \section{Conclusion}
  

  In this paper, we extend the smoothing bound to codes that are not necessarily linear. 
  For linear codes, our smoothing bound corresponds to the one established in previous work~\cite{DDRT2023}.
  Therefore, our bound implies that there exist several non-linear codes that can estimate the smoothness as effectively as linear codes. 
  Additionally, when $f$ is of the special form in \Cref{rem:cayley}, it suffices to find a perfect coloring of the corresponding graph.
  Future work is to construct non-linear codes whose smoothing bounds are as strong as those of linear codes.

  \section*{Acknowledgements}
  The authors would like to thank Professor Akihiro Munemasa for pointing out the diagonalizability of the quotient matrix.
  This work is supported by JSPS Grant-in-Aid for Scientific Research(C) JP22K11912, 
  JST K Program Grant Number JPMJKP24U2, 
  MEXT Quantum Leap Flagship Program (MEXT Q-LEAP) JPMXS0120319794,
  JSPS KAKENHI (25K06927),
  the Yoshida Scholarship Foundation through the Doctor 21 program, 
  and a Waseda Research Institute for Science and Engineering Grant-in-Aid for Young Scientists (Early Bird).

\end{document}